\newtheorem{proposition}{Proposition}
\newtheorem{definition}{Definition}
\newtheorem{corollary}{Corollary}
\pgfplotsset{compat=1.14}
\begin{document}
%
\title{Identifying Operational Data-paths in Software Defined Networking Driven Data-planes}


\author{\IEEEauthorblockN{Jos\'{e} Reyes, Jorge L\'{o}pez, and Djamal Zeghlache}
\IEEEauthorblockA{SAMOVAR, CNRS\\
T\'{e}l\'{e}com SudParis / Universit\'{e} Paris-Saclay\\
9 Rue Charles Fourier, 91000 \'{E}vry, France\\
Email: \{jose.reyes,jorge.lopez,djamal.zeghlache\}@telecom-sudparis.eu}
}


%


\maketitle

\begin{abstract}
In this paper, we propose an approach that relies on distributed traffic generation and monitoring to identify the \emph{operational data-paths} in a given Software Defined Networking~(SDN) driven data-plane. We show that under certain assumptions, there exist necessary and sufficient conditions for formally guaranteeing that all operational data-paths are discovered using our approach. In order to provide reliable communication within the SDN driven data-planes, assuring that the implemented data-paths are the requested (and expected) ones is necessary. This requires discovering the actual operational (running) data-paths in the data-plane. In SDN, different applications may configure different coexisting data-paths, the resulting data-paths a specific network flow traverses may not be the intended ones. Furthermore, the SDN components may be defected or compromised. We focus on discovering the operational data-paths on SDN driven data-planes. However, the proposed approach is applicable to any data-plane where the operational data-paths must be verified and / or certified. A data-path discovery toolkit has been implemented. We describe the corresponding set of tools, and showcase the obtained experimental results that reveal inconsistencies in well-known SDN applications.

\end{abstract}

\begin{IEEEkeywords}
Software Defined Networking; Data-plane analysis; Distributed test case generation; Run-time monitoring;

\end{IEEEkeywords}

%
\IEEEpeerreviewmaketitle

\section{Introduction}\label{sec:intro}

Novel technologies allow flexible and fast network (re-)configuration of homogeneous network devices. Particularly, Software Defined Networking~(SDN) \cite{sdn} allows to centrally configure all data-plane (forwarding) devices; the data-plane devices (e.g., SDN-enabled switches) are configured by so-called SDN applications through the SDN controller. Some of the advantages of SDN are: (i) heterogeneous hardware can be managed with a single vendor-agnostic configuration interface; (ii) central configuration eases the management and reduces its execution time; and (iii) it avoids manual, error-prone configurations of the data-plane devices. For those reasons, SDN networks have evolved from small prototype networks to provider-scale network deployments \cite{googlesdn}, and their popularity constantly increases. Thus, guaranteeing the \emph{correct} functional and non-functional behavior of such systems is crucial \cite{enase18,ewdts18}. 
 
Data-plane devices are configured with \emph{flow rules}, that dictate the actions to perform once receiving the network packets. In fact, the highest priority rule that \emph{matches} a given packet is used to determine the action to take (e.g., drop, forward, etc.). However, as SDN networks are highly dynamic (forwarding devices can be frequently re-configured, and furthermore, by different applications), verifying that the packets follow the correct (intended) \emph{data-paths} is of special interest. Statically analyzing the rules installed in the data-plane is a common approach \cite{mai2011debugging}. However, by employing this approach it may be impossible to retrieve the operational data-paths configured in a given data-plane. For instance, when all network packets are sent to the SDN controller to query the action to take (and subsequently an SDN application decides on the appropriate action to perform, see Section~\ref{sec:sdn} for background concepts on SDN).

The data-paths installed in a data-plane must be \emph{correct} with respect to a number of functional and non-functional properties. For instance, from the functional standpoint the installed data-paths should coincide with the requested ones \cite{enase18}. From the non-functional standpoint (and particularly security), ensuring that there are no additional data-paths from the the intended ones is important to protect data secrecy; similarly, detecting fewer data-paths may be an indicator of a denial of service. In order to provide reliable communications within the SDN data-plane, it is necessary to verify that the data-paths configured in the data-plane are correct. However, to verify the data-paths configured at a given data-plane it is required to retrieve (discover) the data-paths that are \emph{actually implemented} in the data-plane. An immediate question follows: how can the data-paths implemented in a given SDN data-plane be discovered? Further, can it be (formally) guaranteed? This paper is devoted to reply to the previously stated questions, particularly, we focus on providing a formal methodology for retrieving the actual data-paths configured in the data-plane. We assume that there are no restrictions on the access for any Point of Control (PC) nor for any Point of Observation (PO), i.e., we assume we can stimulate the SDN network at any data-generation point, and likewise, that we can observe the network's reaction at any point. 

In the existing literature, there are few works that address the stated problem. Moreover, to the best of our knowledge no formal methods guaranteeing the data-path discovery have been presented (for more information see Section~\ref{sec:rel}). For that reason, we propose a distributed traffic generation and monitoring approach; different network packets are generated at selected nodes (hosts) of the data-plane, and then by monitoring the interfaces of the data forwarding devices, the traffic graph (or data-path) is retrieved. To guarantee all the data-paths installed in the data-plane are identified, we prove the conditions when the execution of a test suite (a set of test cases, i.e., network packets to be generated at given PCs) is necessary and sufficient to observe all implemented data-paths (Section~\ref{sec:analysis}). Using the proposed approach, a set of tools for data-path discovery has been implemented (Section~\ref{sec:tool}). We showcase the experimental results obtained by employing the developed tool. Particularly, we show how a wide-spread SDN application forwards data in an inconsistent and ineffective manner. Further, the developed tools may be used in different application areas. For example, guaranteeing the installed data-paths \emph{conform} to the requested ones; guaranteeing that the time to traverse a data-path is good (performance-wise); guaranteeing that there are no security \emph{faults} in the data-plane; etc.


\section{Background}\label{sec:background}
\input{TIKZ_network.tex}


\subsection{Software Defined Networking}\label{sec:sdn}
In traditional networks, the configuration, management, and data-forwarding interfaces are distributed / located at each of the data forwarding devices in the data-plane. The data-paths (the paths network packets follow in a data-plane) in the network are the result of the configuration on each of the forwarding devices; each of the devices has a local configuration and management interface. Thus, in order to re-configure the data-paths, several devices must be re-configured; as a consequence, while re-configuring each device the network may be in an inconsistent state, the process can be error-prone and slow. As an example, assume a data-plane in a traditional network as (only the data-plane) shown in Figure~\ref{fig:sdn_ex}. Assume there is an issue with the link between the switches $s2$ and $s3$ in the data-plane. The data-path depicted in dashed arrows ($h2\rightarrow s2\rightarrow s3\rightarrow h3$) becomes not operational. In order to re-configure this data-path, for example to $h2\rightarrow s2\rightarrow s4\rightarrow s3\rightarrow h3$, the switches $s2,s3$, and $s4$ must be re-configured, independently.

\begin{figure}[!htb]
    \centering
   \begin{tikzpicture}
        \node[server]                               (h1)    {};
        \node               at ([xshift=-1cm]h1)    (h1l)   {$h_1$};
        \node[l3 switch]    at ([xshift=2cm]h1)     (s1)    {};
        \node               at ([yshift=1cm]s1)     (s1l)   {$s_1$};
        \node[l3 switch]    at ([xshift=2cm]s1)     (s2)    {};
        \node               at ([yshift=1cm]s2)     (s2l)   {$s_2$};
        \node[server]       at ([xshift=2cm]s2)     (h2)    {};
        \node               at ([xshift=.7cm]h2)    (h2l)   {$h_2$};
        \node[l3 switch]    at ([yshift=-1.5cm]s1)  (s3)    {};
        \node               at ([yshift=-1cm]s3)    (s3l)   {$s_3$};
        \node[server]       at ([xshift=-2cm]s3)    (h3)    {};
        \node               at ([xshift=-1cm]h3)    (h3l)   {$h_3$};
        \node[l3 switch]    at ([xshift=2cm]s3)     (s4)    {};
        \node               at ([yshift=-1cm]s4)    (s4l)   {$s_4$};
        \node[server]       at ([xshift=2cm]s4)     (h4)    {};
        \node               at ([xshift=.7cm]h4)    (h4l)   {$h_4$};
        
        \draw[thick] (h1)--(s1);
        \draw[thick] (h2)--(s2);
        \draw[thick] (h4)--(s4);
        \draw[thick] (h3)--(s3);
        
        \draw[thick] (s1.east)--(s2.west);
        \draw[thick] (s3.east)--(s4.west);
        \draw[thick] (s1.south)--(s3.north);
        \draw[thick] (s2.south)--(s4.north);
        \draw[thick] (s2.south west)--(s3.north east);
        
        \node[diamond, fill=red!80!black, minimum size=0.15cm, inner sep=0] at (h1.west)  (pch1)  {};
        \node[diamond, fill=red!80!black, minimum size=0.15cm, inner sep=0] at (h3.west)  (pch3)  {};
        \node[diamond, fill=red!80!black, minimum size=0.15cm, inner sep=0] at (h2.east)  (pch2)  {};
        \node[diamond, fill=red!80!black, minimum size=0.15cm, inner sep=0] at (h4.east)  (pch4)  {};
        
        \node[circle, fill=green!80!black, minimum size=0.15cm, inner sep=0] at (s1.west)         (pos1w)  {};
        \node[circle, fill=green!80!black, minimum size=0.15cm, inner sep=0] at (s1.east)         (pos1e)  {};
        \node[circle, fill=green!80!black, minimum size=0.15cm, inner sep=0] at (s1.south)        (pos1s)  {};
        
        \node[circle, fill=green!80!black, minimum size=0.15cm, inner sep=0] at (s2.west)         (pos2w)  {};
        \node[circle, fill=green!80!black, minimum size=0.15cm, inner sep=0] at (s2.east)         (pos2e)  {};
        \node[circle, fill=green!80!black, minimum size=0.15cm, inner sep=0] at (s2.south)        (pos2s)  {};
        \node[circle, fill=green!80!black, minimum size=0.15cm, inner sep=0] at (s2.south west)   (pos2sw) {};
        
        \node[circle, fill=green!80!black, minimum size=0.15cm, inner sep=0] at (s3.west)         (pos3w)  {};
        \node[circle, fill=green!80!black, minimum size=0.15cm, inner sep=0] at (s3.east)         (pos3e)  {};
        \node[circle, fill=green!80!black, minimum size=0.15cm, inner sep=0] at (s3.north)        (pos3n)  {};
        \node[circle, fill=green!80!black, minimum size=0.15cm, inner sep=0] at (s3.north east)   (pos3nw) {};
        
        \node[circle, fill=green!80!black, minimum size=0.15cm, inner sep=0] at (s4.west)         (pos4w)  {};
        \node[circle, fill=green!80!black, minimum size=0.15cm, inner sep=0] at (s4.east)         (pos4e)  {};
        \node[circle, fill=green!80!black, minimum size=0.15cm, inner sep=0] at (s4.north)        (pos4n)  {};
        
        \node[rectangle,draw,dashed,fill=none,text depth=-3.5cm,text width=8cm, minimum width=8cm,minimum height=4cm] at ([xshift=.85cm, yshift=-.8cm]s1) (main)  {\textbf{Data-plane}};
        
        \draw[-Latex,very thick, gray!80, shorten <=0.2cm,shorten >=0.2cm]([yshift=0.15cm,xshift=0.3cm]h1.east)-- ([yshift=0.15cm]s1.west);
        \draw[-Latex,very thick, gray!80, shorten <=0.1cm,shorten >=0.1cm]([yshift=0.15cm,xshift=0.1cm]s1.east)-- ([yshift=0.15cm]s2.west);
        \draw[-Latex,very thick, gray!80, shorten <=0.3cm,shorten >=0.7cm]([yshift=0.15cm]s2.east)-- ([yshift=0.15cm]h2.west);
        
        \draw[-Latex,dashed, very thick, gray!80, shorten <=0.7cm,shorten >=0.3cm]([yshift=-0.15cm]h2.west)-- ([yshift=-0.15cm]s2.east);
         \draw[-Latex,dashed, very thick, gray!80, shorten <=0.1cm,shorten >=0.3cm]([yshift=-0.15cm]s2.south west)-- ([yshift=-0.15cm]s3.north east);
        \draw[-Latex,dashed, very thick, gray!80, shorten <=0.3cm,shorten >=0.5cm]([yshift=-0.15cm]s3.west)-- ([yshift=-0.15cm]h3.east);
        
        \node[server]       at ([yshift=2cm,xshift=1cm]s1)     (ctrl)    {};
        \node               at ([xshift=-2cm]ctrl)  (ctrll)   {SDN controller};
        
        \draw[thick, dotted, gray] (s1.north)--(ctrl.south east);
        \draw[thick, dotted, gray] (s2.north)--(ctrl.south west);
        \draw[thick, dotted, gray] ([xshift=0.1cm]s3.north east)--(ctrl.south);
        \draw[thick, dotted, gray] (s4.north west)--(ctrl.south);
        
        \node[rectangle,pattern=horizontal lines, minimum width=0.5cm, minimum height=1cm]    at([yshift=1.5cm, xshift=-2cm]ctrl)    (app1)  {};
        \node[] at ([xshift=-1cm]app1)    (app1l) {App. 1};
        \node[] at ([xshift=2cm]app1)   (dots)  {\huge\ldots};
        \node[rectangle,pattern=horizontal lines, minimum width=0.5cm, minimum height=1cm]    at([xshift=2cm]dots)    (app2)  {};
        \node[] at ([xshift=1cm]app2)    (app2l) {App. $N$};
        
        \draw[thick, dotted, gray] (app1.south)--(ctrl.north);
        \draw[thick, dotted, gray] (app2.south)--(ctrl.north);
   \end{tikzpicture}
    \caption{Example SDN infrastructure with data-paths, PCs and POs.}
    \label{fig:sdn_ex}
\end{figure}
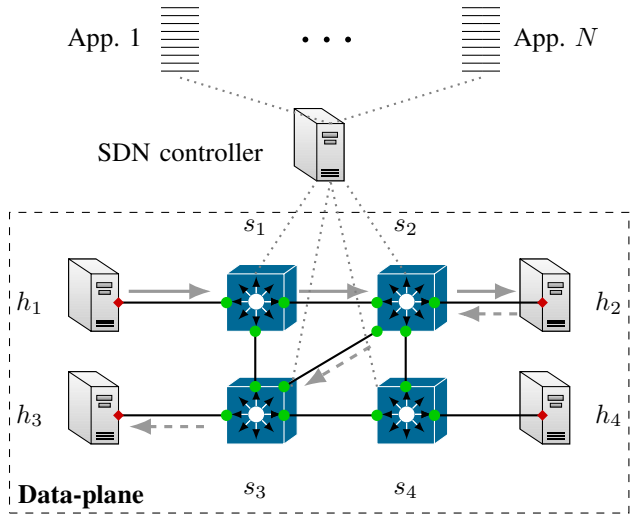

 SDN overcomes these limitations by separating the control and the data-plane layers \cite{sdn}. With a centralized SDN controller, SDN applications can automatically re-configure the SDN data-plane in a timely manner. Furthermore, the devices in the data-plane may have different protocols and interfaces (called southbound interface), while the controller has a single communication protocol (northbound interface) with the applications; thus, simplifying communication with heterogeneous and vendor-agnostic data-planes. Finally, SDN-enabled forwarding devices stir (route / forward) the incoming network packets based on so-called flow rules installed by the SDN applications (through the controller). A flow rule consists of three main (functional) parts: a packet matching part, an action part and a location / priority part. The matching part describes the values which a received network packet should have for a given rule to be applied. The action part states the required operations to perform to the matched network packets, while the location / priority part controls the hierarchy of the rules using tables and priorities. Finally, it is important to note that there exists a special output port for a flow rule, the controller port; when a packet is sent to the controller, the controller queries the SDN applications to decide the actions to perform to the packet; as a result, the controller may install new flow rules, drop or forward the packet to a specific port.
 
 
 \subsection{System testing}
 Traditionally, system testing is conceived as a procedure to guarantee that a given System Under Test~(SUT) is functionally and / or non-functionally correct. In this paper, we do not focus on guaranteeing correctness (we focus on providing means for guaranteeing such correctness), however, many of the concepts used in our approach are based on system testing and monitoring concepts. 
 
 Generally speaking, a given SUT $\mathcal{S}$ has an alphabet of input and output (observable reactions) symbols, $I$ and $O$, respectively. When $\mathcal{S}$ is stimulated with a sequence of inputs $\alpha \in I^*$, it produces a sequence of outputs $\beta \in O^*$. To guarantee the correct behavior of $\mathcal{S}$, formal / model-based testing for reactive systems is widely adopted. A test suite $TS$ is a set of input (test) sequences, sometimes a test suite can contain the expected output reactions. Additionally, a Point of Control~(PC) is an \emph{interface} where $\mathcal{S}$ can be stimulated (inputs can be generated for the system), and a Point of Observation~(PO), is an interface where the output reactions of $\mathcal{S}$ can be observed. Notice that we consider the distributed nature by stimulating $S$ at several PCs, and furthermore collecting the information from different POs. Nonetheless, the information is collected and processed at a centralized analyzer \cite{www}.

\section{Analyzing SDN data-planes}\label{sec:analysis}


\subsection{Basic concepts}

This section presents the main contribution of our work. First, we introduce the necessary definitions and assumptions, and later we prove the necessary and sufficient conditions when all the data-paths in a data-plane are discovered. In this paper, we consider the data-plane is the SUT (where existing data-paths need to be retrieved for later verification), and accordingly we make the following assumptions regarding its structure and functional aspects.

\begin{enumerate}
    \item The data-plane is given. We note that SDN controllers allow the automatic retrieval of the data-plane through the link layer discovery protocol \cite{lldp} or others. However, we assume that the discovery process is more reliable if it does not depend on information provided by the SDN network, i.e., the SUT. One of the reasons is that the discovery procedure may be used to estimate the correctness of the SUT \cite{enase18,ewdts18} (see Section~\ref{sec:rel} for related work).\label{ass:given}
    
    \item In the data-plane, there are data generation / reception devices, i.e., hosts, and data forwarding devices (e.g., switches); hosts do not forward traffic, and forwarding devices do not generate / receive data. Operating systems of networking devices may allow to violate this assumption, however, a networking device can be considered as two different ones in our model if this is the intended functional behavior. Additionally, the data-plane must have at least two hosts and one forwarding device; which makes sense from the functional point of view.\label{ass:nodestype}
    
    \item In the data-plane, devices are connected, however, no two hosts are connected to each other, and data forwarding devices have at least two connections. Furthermore, each pair of devices is connected though a single port, sharing a single link between them. These assumptions are reasonable for networking infrastructures, hosts are connected to a single \emph{access} forwarding device, and a forwarding device with a single connection cannot forward data. Likewise, when two devices are connected with many links, the link is usually \emph{bonded} and considered a single link with higher bandwidth. Finally, we assume that the communication between each pair of nodes is bidirectional.\label{ass:conntype} 
    
    \item In the data-plane, (network) packets generated by hosts are routed (stirred) by the forwarding devices. The forwarding devices can either \emph{drop} a packet (not forward it) or it can be forwarded to a set of output ports (and correspondingly neighboring devices). Furthermore, we assume the decision is taken only depending on the input port and network packet headers \cite{ofsspec}. Additionally, we assume the links are 100\% reliable (we do not consider packet loss), and the packets are not altered by the forwarding devices (although our approach may work independently from this assumption).\label{ass:traffic}
    
    \item In the data-plane, traffic can be generated at any of the hosts. Likewise, traffic can be observed at all the network interfaces in the data-plane. We assume there is no restriction of access. This is a reasonable assumption if the discovery procedure is part of a certification process. \label{ass:access}
    
    \item Finally, we assume that the configuration in the data-plane is not modified while the discovery procedure is being executed. This may sound as a strong assumption given the dynamic nature of SDN networks. However, this scenario is feasible for certain cases, for example for a certification process. Moreover, the packets generated for testing purposes are distinguishable from data passing though the data-plane.\label{ass:fixedconfig}
\end{enumerate}

Given the previously stated assumptions, we consider the data-plane is given in our approach using the following definition.

\begin{definition}
    A data-plane $D$ is a weighted graph $(V,E,\mathcal{I})$, where $V$ is the set of nodes, $E$ is the set of undirected edges (unordered pairs of nodes), and $\mathcal{I}$ is the \emph{interface} function, that maps an edge to a pair of pairs of nodes and interfaces (the port numbers at each node), i.e., $\mathcal{I}: E\mapsto (V\times\mathds{N})^2$ (we consider the set of natural numbers denoted by $\mathds{N}$ as the set of non-negative integers). Additionally, $D$ has the following properties:
    \begin{itemize}
        \item The set of vertices is the union of two subsets, the set of hosts $H\subset V$ and the set of forwarding devices $S\subset V$, which are disjoint, and their cardinalities are restricted in the following manner: $2\leq|H|<|V|$, $1\leq|S|<|V|$, and $|H|+|S|=|V|$.
        
        \item The connectivity of the graph holds the following properties: (i) $\forall h\in H\; deg(h)=1$; (ii) $\forall s\in S\; deg(s)\geq 2$, where $deg$ is the degree function, as usual; and (iii) $\forall e=\{v_a,v_b\} \in E\; (v_a\in H \implies v_b \not\in H) \wedge (v_b\in H \implies v_a \not\in H)$.
    \end{itemize}
\end{definition}

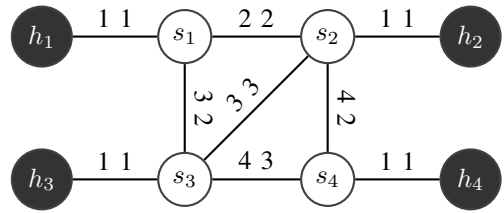
\begin{figure}[!htb]
    \centering
\begin{tikzpicture}[node distance=1.9cm,>=stealth',bend angle=45,auto]
    \tikzstyle{switch}=[circle,thick,draw=black!75,minimum size=6mm]
    \tikzstyle{host}=[circle,thick,draw=black!75,fill=black!80,minimum size=4mm,text=white]
    \tikzstyle{undirected}=[thick]
    \tikzstyle{directed}=[thick,->]
    
    \node[host] (h1) {$h_1$};
    \node[switch, right of=h1] (s1) {$s_1$};
    \node[switch, right of=s1] (s2) {$s_2$};
    \node[switch, below of=s1] (s3) {$s_3$};
    \node[switch, below of=s2] (s4) {$s_4$};
    \node[host, right of=s2] (h2) {$h_2$};
    \node[host, right of=s4] (h4) {$h_4$};
    \node[host, left of=s3] (h3) {$h_3$};
   
    \path   (h1)    edge[undirected]    node[above] {1 \hfill 1}   (s1)
            (h2)    edge[undirected]    node[above] {1 \hfill 1}   (s2)
            (h3)    edge[undirected]    node[above] {1 \hfill 1}   (s3)
            (h4)    edge[undirected]    node[above] {1 \hfill 1}   (s4)
            (s1)    edge[undirected]    node[above] {2 \hfill 2}   (s2)
                    edge[undirected]    node[above, rotate=-90] {3 \hfill 2}   (s3)
            (s2)    edge[undirected]    node[above, rotate=45] {3 \hfill 3}   (s3)
                    edge[undirected]    node[above, rotate=-90] {4 \hfill 2}   (s4)
            (s3)    edge[undirected]    node[above] {4 \hfill 3}   (s4);
\end{tikzpicture}
\caption{Example data-plane model}
    \label{fig:data-plane}
\end{figure}

As an example, consider the data-plane shown in Figure~\ref{fig:sdn_ex}, this data-plane is represented by the graph $D$ in Figure~\ref{fig:data-plane} with $H=\{h_1,h_2,h_3,h_4\}$, $S=\{s_1,s_2,s_3,s_4\}$, $E=\{\{h_1,s_1\},\{h_2,s_2\},\{h_3,s_3\},\{h_4,s_4\},\{s_1,s_2\},\{s_1,s_3\},\{s_2,s_3\},\{s_2,s_4\},\{s_3,s_4\}\}$, and $\mathcal{I}$ defined as follows: 
\begin{small}
\begin{align*} 
        \mathcal{I}(e)= &\begin{cases}
            ((h_1,1),(s_1,1)), & if\; e = \{h_1,s_1\} \\
            ((h_2,1),(s_2,1)), & if\; e = \{h_2,s_2\} \\ 
            ((h_3,1),(s_3,1)), & if\; e = \{h_3,s_3\} \\
            ((h_4,1),(s_4,1)), & if\; e = \{h_4,s_4\} \\
            ((s_1,2),(s_2,2)), & if\; e= \{s_1,s_2\}\\
            ((s_1,3),(s_3,2)), & if\; e= \{s_1,s_3\}\\
            ((s_2,3),(s_3,3)), & if\; e= \{s_2,s_3\}\\
            ((s_2,4),(s_4,2)), & if\; e= \{s_2,s_4\}\\
            ((s_3,4),(s_4,3)), & if\; e= \{s_3,s_4\}
        \end{cases} 
\end{align*}
\end{small}

According to our assumptions, and particularly with the node type and traffic assumptions (Assumption~\ref{ass:nodestype} and \ref{ass:traffic}, respectively), messages being sent from a host $h\in H$ are routed through different nodes of $S$, and the decision where to route is deterministic, and is based on the values within the message (header) and the interface (or input port) where the message is received. For that reason, we define the concept of \emph{traffic type}. First, we consider that a network packet $p$ is a binary string ($p\in\{0,1\}^*$), and thus so its header. A packet header has a predefined set of parameters. Therefore, we consider a packet header as a Boolean vector, denoted $\mathcal{H}(p)$. Consider the header with $n$ (relevant) parameters of total length $k=\sum_{i=1}^{n}k_i$:
\begin{small}
\[\mathcal{H}(p)=\underbrace{\overbrace{x_{1,1} x_{1,2} \ldots x_{1,k_1}}^{k_1}\overbrace{x_{2,1} x_{2,2} \ldots x_{2,k_2}}^{k_2}\ldots \overbrace{x_{n,1} x_{n,2} \ldots x_{n,k_n}}^{k_n}}_{k}\]
\end{small}

In the OpenFlow~(OF) protocol (a widespread and popular protocol for SDN-enabled switches) specification \cite{ofsspec}, there exists a minimum set of header parameters (and corresponding lengths) upon which the traffic can be matched (and later routed), for example, destination and source IP address, destination and source TCP port, etc. Therefore, forwarding devices route packets depending on the values of the Boolean vector $\mathcal{H}(p)$. In general, the matching part of a forwarding rule has the form ``$parameter\;1 = value\;1 \wedge parameter\;2 = value\;2 \wedge \ldots \wedge parameter\; n=value \;n$''; as an example, ``destination IP address = 10.0.0.1 $\wedge$ destination TCP port = 80''. Informally, we consider that the traffic type is a set of network packets whose header match particular values. Formally, we consider:

\begin{definition}
    A traffic type indicator $1_\tau$ is a Boolean function (characteristic function) for a given packet header $\mathcal{H}(p)\in \{0,1\}^k$  such that $1_\tau$ equals $1$ when $p$ belongs to the traffic type $\tau$.
\end{definition}

As an example, consider the traffic type $\tau=$``packets with header matching destination TCP port = 80'', assume the only two relevant parameters are destination IP address of length 32, and destination TCP port of length 16 (the corresponding Boolean vector has length 48). Considering that the first 32 inputs correspond to the destination IP address, and the next 16 to the destination TCP port, $1_\tau$ can be expressed as:
\begin{align*}
    \overline{x_{33}}\wedge\overline{x_{34}}\wedge\overline{x_{35}}\wedge\overline{x_{36}}\wedge\overline{x_{37}}\wedge\overline{x_{38}}\wedge\overline{x_{39}}\wedge\overline{x_{40}}\\\wedge\overline{x_{41}}\wedge x_{42}\wedge\overline{x_{43}}\wedge x_{44}\wedge\overline{x_{45}}\wedge\overline{x_{46}}\wedge\overline{x_{47}}\wedge\overline{x_{48}}
\end{align*}

A packet $p$ is of type $\tau$ if $1_\tau (\mathcal{H}(p)) = 1$; for convenience, we denote it as $p_\tau$. Further, the traffic of type $\tau$ is the set $\{p\in\{0,1\}^*|1_\tau(\mathcal{H}(p))=1\}$. It is important to note that a packet of type $\tau$ may be routed differently from a packet of type $\tau'$, coming from the same predecessor (node). That implies that the paths taken by a given packet $p$ in a data-plane $D$ depend on its traffic type. Likewise, packets of the same packet type may be routed differently, when they arrive from different predecessors; this recursively applies, until the origin of the packet (a host in the data-plane). When data are sent (encapsulated in network packets) by hosts in the data-plane (see assumption~\ref{ass:nodestype}), depending on the configuration of the forwarding devices, the data follow a specific \emph{data-path}.

\begin{definition}\label{def:data-path}
    A data-path for a packet $p$ with header $\mathcal{H}(p)$ in a data-plane $D=(H\cup S, E, \mathcal{I})$ starting at host $h\in H$, denoted $\pi_{D}(h, \mathcal{H}(p))$ is a non-empty (potentially infinite) sequence of directed edges that represents a path that a network packet with a specific header follows in the data-plane, for a given host and traffic type. The $j$-th edge in the path is denoted by $\pi_{D}^j(h, \mathcal{H}(p))$; we assume the first edge has index 1, and that the length of a data-path is the number of edges on it, denoted as $|\pi_{D}(h, \mathcal{H}(p))|$. We denote the set all data-paths for a packet $p$ with header $\mathcal{H}(p)$ starting at host $h$ as $\Pi_D(h,\mathcal{H}(p))$. Similarly, $\Pi_D(\mathcal{H}(p))$ the set of all data-paths for a packet $p$ with header $\mathcal{H}(p)$; the set of all data-paths of type $\tau$ in a data-plane is denoted as $\Pi_D(\tau)$; and the set of all data-paths in a data-plane is denoted as $\Pi_D$. We note that $\pi_{D}(h,\mathcal{H}(p))$ holds the following properties:
    \begin{itemize}
        \item $\forall j \in \{1,\ldots,|\pi_{D}(h, \mathcal{H}(p))|\}\; \pi_{D}^j(h, \mathcal{H}(p))=(v_a,v_b)\implies \{v_a,v_b\}\in E$, the directed edges are \emph{part} of $D$;
        \item $\forall j \in \{1,\ldots,|\pi_{D}(h, \mathcal{H}(p))| - 1\}\; \pi_{D}^j(h, \mathcal{H}(p))=(v_a,v_b)\implies \pi_{D}^{j+1}(h, \mathcal{H}(p))=(v_b,v_c)$, the sequence of edges connects vertices in $D$;
        \item $\pi_{D}^1(h, \mathcal{H}(p))=(h,s_h)$, data-paths start at hosts.
    \end{itemize}
\end{definition}

As an example, in Figure~\ref{fig:sdn_ex} we present two different data-paths $(h_1,s_1)(s_1,s_2)(s_2,h_2)$ (depicted with solid arrows), and $(h_2,s_2)(s_2,s_3)(s_3,h_3)$ (depicted with dashed arrows). It is important to note that a packet $p$ generated at host $h$ can induce data across a set of paths, the reason is that switches can \emph{clone} packets (forwarding traffic to a set of output ports, see assumption~\ref{ass:traffic}). However, a set of data-paths for a given packet has \emph{common prefixes}, for example, the set of paths $\{(h_1,s_1)(s_1,s_2)(s_1,s_3)(s_3,h_3),(h_1,s_1)(s_1,s_2)(s_2,h_2)\}$, reflects the fact that at $s_1$ when receiving a packet with header $\mathcal{H}(p)$ from $h_1$ the switch outputs the packet to $s_2$ and $s_3$ simultaneously. In a data-path a repeated edge implies the existence of a loop, and evidently, the data-path is of infinite length.

\subsection{Identifying operational data-paths}
In this section, we present a distributed test case generation / execution, and monitoring approach to discover the installed data-paths in a data-plane. We first give an overview of the proposed approach, and then proceed to the formal definitions and propositions that guarantee all (operational) data-paths are discovered using the presented approach. 

In order to discover the implemented data-paths on a data-plane, we propose to stimulate the SUT (the data-plane) at different points with network packets, and to monitor at different network interfaces in the data-plane, in order to observe the existing data-paths. We propose to install points of control at all hosts' interfaces, and points of observation at all forwarding devices' interfaces. As an example, in Figure~\ref{fig:sdn_ex} we illustrate the PCs as (red) diamonds, and the POs as (green) circles. We consider a distributed test case a specific PC (host) and a packet header; executing a test case in a data-plane is to send a packet with the given headers from the given (host) PC; as hosts have a single interface (see assumption~\ref{ass:conntype}), there is a single interface where to send the network packet. After a test case is executed against an SUT, the packet is forwarded and while traversing a network interface, a monitor installed at the interface in question sends the observed packets to a centralized processing service and the \emph{traces} are analyzed. We formally define:

\begin{definition}
    A distributed test case for a data-plane $D=(H\cup S, E, \mathcal{I})$, denoted as $tc_D$ is a pair $(h,\mathcal{H}(p))$, where $h\in H$, and $\mathcal{H}(p)$ is a network packet header. Correspondingly, a test suite $TS_D$ is a set of test cases for a given data-plane. Executing a test case $tc_D=(h,\mathcal{H}(p))$ means to generate $p$ (with header $\mathcal{H}(p)$) at host $h$.
\end{definition}

We note that given the traffic assumption (assumption~\ref{ass:traffic}), we consider that the links are 100\% reliable, and therefore, the test suites contain only distinct distributed test cases. Considering the reliability of a link in terms of probability is out of the scope of this paper and left for future work (see Section~\ref{sec:conc}). As discussed before, after executing a distributed test case on a data-plane, the generated network packet is observed though the monitoring interfaces. Formally, we define:

\begin{definition}
    An observation $\omega_{tc}$ after executing a test case $tc_D$ in a data-plane $D=(H\cup S, E, \mathcal{I})$ is a triple $(s,i,t)$, where $s\in S$, $i$ is the interface at the node $s$ where the packet is observed, and $t$ is a time-stamp, when the observation occurred. Correspondingly, the set of all observations after executing a given test case $tc_D$ is denoted as $\Omega_{tc}$. 
\end{definition}

As an example, consider the data-paths as depicted in Figure~\ref{fig:sdn_ex} (correspondingly, the data-plane as shown in Figure~\ref{fig:data-plane}). Assume the path $(h_1,s_1)(s_1,s_2)(s_2,h_2)$ (depicted with solid arrows) is configured for packets with header matching destination TCP port = 80. Executing a distributed test case $tc=(h_1,\mathcal{H}(p))$ (with header matching destination TCP port 80) produces the set of observations\footnote{For easiness and readability we use simple integers for the time-stamps.} $\Omega_{tc}=\{(s_1,1,1),(s_1,2,2),(s_2,2,3),(s_2,1,4)\}$. It is important to note that in this paper we assume all nodes in the data-plane are synchronized, and therefore, the time-stamps obtained at different nodes are comparable. Considering architectures with different clocks is out of the scope of the paper, and also left for future work. After understanding the relation between test cases and observations, an immediate question arises: can the set of observations be used to discover data-paths? For that reason, we present the following statement.

\begin{proposition}\label{stm:dpfromobs}
    For any set of observations $\Omega_{tc}$, after executing a test case $tc_D=(h,\mathcal{H}(p))$ in a data-plane $D=(H\cup S, E, \mathcal{I})$, the set of data-paths $\Pi_D(h,\mathcal{H}(p))$ can be computed.
\end{proposition}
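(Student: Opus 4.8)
The plan is to give a constructive reconstruction algorithm: I turn the raw triples in $\Omega_{tc}$ into the directed edges the packet actually traversed, and then chain those edges into the members of $\Pi_D(h,\mathcal{H}(p))$. The whole argument rests on the fact that $D$, and in particular the interface function $\mathcal{I}$, is given (Assumption~\ref{ass:given}), so every observation can be attributed to a concrete edge of $D$. Concretely, for each $\omega=(s,i,t)\in\Omega_{tc}$ I look up the pair $(s,i)$ in $\mathcal{I}$: since each pair of devices shares a single link through a single port (Assumption~\ref{ass:conntype}), $(s,i)$ occurs in exactly one value of $\mathcal{I}$, so there is a unique edge $e=\{s,v\}\in E$ and a unique neighbour $v$ incident to $s$ at port $i$. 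This first step is a pure lookup, and it is where the ``$D$ is given'' hypothesis does the work.

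Second, I orient each such edge into a directed edge using the time-stamps, which are comparable across nodes by the synchronisation assumption, together with the known source host $h$ (read off from $tc_D$). If $v\in S$, the same physical crossing of $\{s,v\}$ is observed at both endpoints, i.e.\ there is a paired observation at the matching port of $v$ prescribed by $\mathcal{I}(e)$; the endpoint with the smaller time-stamp is the tail and the one with the larger time-stamp is the head, so the crossing is oriented from the earlier node to the later node. If $v\in H$, only the switch-side observation exists, because an observation is by definition a triple $(s,i,t)$ with $s\in S$; here I invoke Assumption~\ref{ass:nodestype} (hosts do not forward, and only $h$ generates traffic): an edge incident to $h$ is oriented $(h,s)$ and is precisely the initial edge $\pi_D^1=(h,s_h)$ of Definition~\ref{def:data-path}, whereas an edge incident to any other host $v\neq h$ must be the packet leaving the switch and is oriented $(s,v)$.

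Third, I assemble the oriented edges into paths by exploiting the connectivity conditions of Definition~\ref{def:data-path}. Reconstruction starts from the unique initial edge $(h,s_h)$ and repeatedly appends a successor: a directed edge $(v_a,v_b)$ may be followed only by a directed edge whose tail is $v_b$ and whose departure time-stamp at $v_b$ strictly exceeds the arrival time-stamp of $(v_a,v_b)$ at $v_b$, which matches the causal ordering of events. Because forwarding depends only on the input port and the (fixed) header (Assumption~\ref{ass:traffic}), several successors may share the same tail $v_b$ exactly when the switch clones the packet, and this reproduces the common-prefix (tree) structure of $\Pi_D(h,\mathcal{H}(p))$; a directed edge that reappears at the same ports signals a loop, whereupon the corresponding path is reported as infinite, as already noted after Definition~\ref{def:data-path}.

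The step I expect to be the main obstacle is the orientation and chaining in the presence of cloning and loops. When an internal link is crossed more than once (a cycle), its two ports carry several time-stamps, and the observations at the two ends must be paired correctly before a direction can be assigned; likewise, successive visits to the same switch must be matched to the right incoming and outgoing crossings. Making this pairing unambiguous is exactly what the comparable time-stamps and the propagation delay between the two ends of a link are for, and the remaining content of the proof is to verify that the pairing is always well defined. Completeness then follows because a point of observation sits on every forwarding-device interface and access is unrestricted (Assumption~\ref{ass:access}), so no crossing is missed and the reconstructed set is exactly $\Pi_D(h,\mathcal{H}(p))$.
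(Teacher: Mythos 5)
Your overall strategy coincides with the paper's: the proof is ``by construction,'' mapping each observation to an edge of $D$ via the interface function $\mathcal{I}$ and assembling the edges, using the comparable time-stamps, into the common-prefix tree whose root-to-leaf paths form $\Pi_D(h,\mathcal{H}(p))$; this is exactly what the paper's Algorithm~\ref{algo:dpfromos} does (it grows a rooted ``flow tree'' and extracts the paths by depth-first search). The genuine gap is in your middle phase. You orient each undirected edge \emph{in isolation}, by pairing the two observations produced by one crossing of a switch-to-switch link and declaring the endpoint with the smaller time-stamp the tail. But an observation $(s,i,t)$ carries no ingress/egress flag, and the well-definedness of this pairing --- which you explicitly defer (``the remaining content of the proof is to verify that the pairing is always well defined'') --- is not a routine verification: it fails in general. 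Under the cloning permitted by Assumption~\ref{ass:traffic}, two copies of the test packet can cross the same bidirectional link (Assumption~\ref{ass:conntype}) in opposite directions at overlapping times; for instance, a switch clones the packet to $s_a$ and $s_b$, then $s_a$ forwards its copy to $s_b$ while $s_b$ forwards its copy to $s_a$. Each port of that link then carries two time-stamps, and if the two crossings are nested in time (one copy slower than the other), the rule ``pair them up, earlier endpoint is the tail'' admits two consistent pairings, one of which yields two crossings in the same direction --- a wrong orientation --- and nothing in the triples distinguishes them. So phase two, on which your chaining phase depends, cannot be completed as stated.

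The paper's algorithm never has to solve this local orientation problem, which is the substantive difference between the two constructions. It processes \emph{all} observations in global chronological order while growing the tree from the root $h$, and interprets each new observation relative to the already-built prefix of the data-paths: a node's ingress time $TI$ is set only when its parent's corresponding egress time $TE$ is earlier, and a new child is attached only to a node whose $TI$ is already set and smaller than the new time-stamp; loops and disconnected observations trigger error termination (Proposition~\ref{stmt:correctness}). Direction is thus inferred from tree context rather than from pairing a link's observation lists. If you replace your per-link orientation step by this chronological, context-driven attachment, your argument becomes the paper's; as written, it rests on a claim that does not hold.
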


\begin{proof}
   By construction (in Algorithm~\ref{algo:dpfromos}), we show that the set of observations can be used to compute a set of data-paths. Further, we note that by definition (Definition~\ref{def:data-path}) the set of data-paths that corresponds to the execution of $tc_D=(h,\mathcal{H}(p))$ is $\Pi_D(h,\mathcal{H}(p))$.
\end{proof}

\begin{algorithm}[!htb]
    \scriptsize
    \SetKwInOut{Input}{input}\SetKwInOut{Output}{output}\SetKw{KwBy}{by}
    \Input{A data-plane $D=(H\cup S, E, \mathcal{I})$, a test case $tc_D=(h,\mathcal{H}(p))$, the non-empty set of all observations after executing $tc_D$, $\Omega_{tc}$}
    \Output{A set of data-paths $\Pi_D(h,\mathcal{H}(p))$ or an error message}
    \textbf{Step 1:} Pop the observation with the smallest time from $\Omega_{tc}$ and assign it to $\omega=(v,i,t)$\;
    \textbf{Step 2:} Create a rooted tree $T$ with a single node, the root $R$, labelled with $h$, and set $TE$ of $R$ to $0$\;
    \textbf{Step 3:} \;
    \Indp\Repeat{$\Omega_{tc} \not= \emptyset$}
    {
        \textbf{Step 3.1:} Set the current level to the leafs of $T$\;
        \textbf{Step 3.2:}\;
        \Indp\uIf{$\exists$ a node $N$ in the current level (of $T$) that is labelled with $v$ (of $\omega$)}
        {
            \uIf{$N'$, the parent of $N$ (in $T$) is labelled by $s$, the node that is connected to $v$ though interface $i$ in $D$ and $TI$ is not set in $N$ and the $TE$ of $N'$ that corresponds to $N$ is less than $t$}
            {
                Set $TI$ of $N$ to $t$\;
            }
            \uElseIf{$TI$ is set in $N$ and $TI$ of $N<t$}
            {
                \If{Starting from $N$, recursively check if in the same branch (of $N$) it exists a node labeled by $s$ with a parent labelled by $v$}
                {
                    Error ($\exists$ a loop)!
                }
                Create a node $C$ labelled by $s$, the node that is connected to $v$ though interface $i$ in $D$\;
                Add $C$ to the children of $N$ and set $TE$ of $C$ in $N$ to $t$\;
                Set $TI$ of $N$ to $t$\;
            }
            \Else
            {
                Set the current level to the parents of the current level and go to \textbf{Step~3.2}\;
            }
        }
        \uElseIf{$\exists$ a node $N$ in the current level (of $T$) that is labelled with the node that is connected to $v$ though interface $i$ in $D$}
        {
            \If{Starting from $N$, recursively check if in the same branch (of $N$) it exists a node labeled by $s$ with a parent labelled by $v$}
            {
                    Error ($\exists$ a loop)!
            }
            Create a node $C$ labelled by $v$\;
            Add $C$ to the children of $N$ and set $TE$ of $C$ in $N$ to $t$\;
        }
        \Else
        {
            \If{the current level is not the root of $T$}
            {
                Set the current level to the parents of the current level and go to \textbf{Step~3.2}\;
            }
            \Else
            {
                Error (disconnected data-path)!
            }
        }\Indm
        \textbf{Step 3.3:} Pop the observation with the smallest time from $\Omega_{tc}$ and assign it to $\omega$\;
    }\Indm
    \textbf{Step 4:} Using a depth-first search, add to $\Pi_D(h,\mathcal{H}(p))$ all paths from the root to all leaves of $T$, and \Return{$\Pi_D(h,\mathcal{H}(p))$}
    \caption{Data-path construction from an observation set}\label{algo:dpfromos}
\end{algorithm}

Note that Algorithm~\ref{algo:dpfromos} creates a tree data structure (referred throughout the paper as a flow tree), in order to obtain the data-paths from the observations. Likewise, note that in the tree each of the nodes has a time where the message entered the node, the Time of Ingress (TI) and one Time of Egress (TE) per child, which denotes the time at which a message left the given node. Later, the set of paths is constructed using a simple depth-first search. To give a better intuition of how Algorithm~\ref{algo:dpfromos} works, consider a test case $tc_D=(h,\mathcal{H}(p))$, and a set of observations $\Omega_{tc}=\{(s_1,1,1),(s_1,2,2),(s_1,3,3),(s_2,2,4),(s_3,2,5),(s_2,1,6),(s_3,1,7)\}$; some of the stages of the \emph{flow tree} construction are depicted in Figure~\ref{fig:flowtree}. 

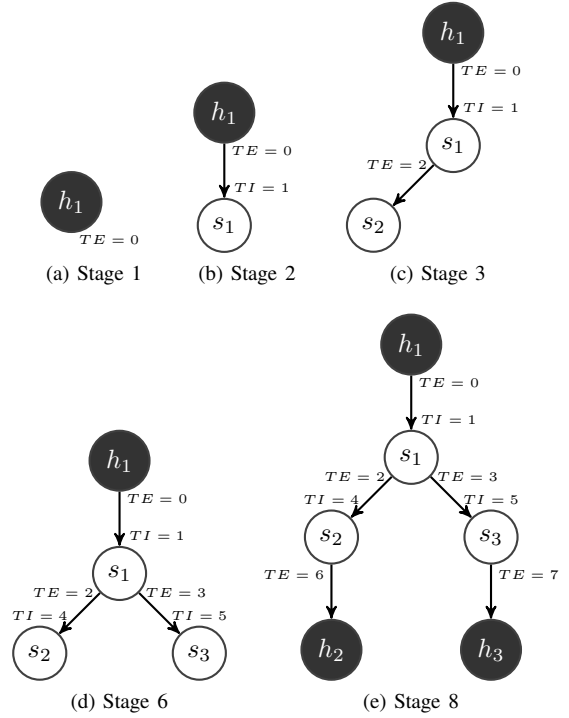
\begin{figure}[!htb]
    \centering
    \subfloat[t][Stage 1\label{fig:ft1}]
    {
    \begin{tikzpicture}[node distance=1.5cm,>=stealth',bend angle=45,auto]
    \tikzstyle{switch}=[circle,thick,draw=black!75,minimum size=6mm]
    \tikzstyle{sut}=[circle,thick,draw=black!75,pattern=dots,minimum size=6mm]
    \tikzstyle{host}=[circle,thick,draw=black!75,fill=black!80,minimum size=4mm,text=white]
    \tikzstyle{undirected}=[thick]
    \tikzstyle{directed}=[thick,->]
    \node[host] (h1) {$h_1$};
    \node[]  at([yshift=-0.5cm,xshift=0.5cm]h1)    (h1T){\tiny $TE=0$};
    \end{tikzpicture}
    }
    \quad
    \subfloat[t][Stage 2\label{fig:ft2}]
    {
    \begin{tikzpicture}[node distance=1.5cm,>=stealth',bend angle=45,auto]
    \tikzstyle{switch}=[circle,thick,draw=black!75,minimum size=6mm]
    \tikzstyle{sut}=[circle,thick,draw=black!75,pattern=dots,minimum size=6mm]
    \tikzstyle{host}=[circle,thick,draw=black!75,fill=black!80,minimum size=4mm,text=white]
    \tikzstyle{undirected}=[thick]
    \tikzstyle{directed}=[thick,->]
    
    \node[host] (h1) {$h_1$};
    \node[]  at([yshift=-0.5cm,xshift=0.5cm]h1)    (h1T){\tiny $TE=0$};
    \node[switch, below of=h1] (s1) {$s_1$};
    \node[] at([yshift=0.5cm,xshift=0.5cm]s1) (s1ti) {\tiny $TI=1$};
    
    \path   (h1)    edge[directed]    node[above] {}   (s1);
    
    \end{tikzpicture}
    }
    \quad
    \subfloat[t][Stage 3\label{fig:ft3}]
    {
    \begin{tikzpicture}[node distance=1.5cm,>=stealth',bend angle=45,auto]
    \tikzstyle{switch}=[circle,thick,draw=black!75,minimum size=6mm]
    \tikzstyle{sut}=[circle,thick,draw=black!75,pattern=dots,minimum size=6mm]
    \tikzstyle{host}=[circle,thick,draw=black!75,fill=black!80,minimum size=4mm,text=white]
    \tikzstyle{undirected}=[thick]
    \tikzstyle{directed}=[thick,->]
    
    \node[host] (h1) {$h_1$};
    \node[]     at([yshift=-0.5cm,xshift=0.5cm]h1)  (h1T){\tiny $TE=0$};
    \node[switch, below of=h1] (s1) {$s_1$};
    \node[]     at([yshift=0.5cm,xshift=0.5cm]s1)   (s1ti) {\tiny $TI=1$};
    \node[]     at([yshift=-0.25cm,xshift=-0.75cm]s1)  (s1te) {\tiny $TE=2$};
    \node[switch, below left of=s1] (s2) {$s_2$};
    
    \path   (h1)    edge[directed]  node[above] {}  (s1)
            (s1)    edge[directed]  node[above] {}  (s2);
    
    \end{tikzpicture}
    }
    \quad
    \subfloat[t][Stage 6\label{fig:ft6}]
    {
    \begin{tikzpicture}[node distance=1.5cm,>=stealth',bend angle=45,auto]
    \tikzstyle{switch}=[circle,thick,draw=black!75,minimum size=6mm]
    \tikzstyle{sut}=[circle,thick,draw=black!75,pattern=dots,minimum size=6mm]
    \tikzstyle{host}=[circle,thick,draw=black!75,fill=black!80,minimum size=4mm,text=white]
    \tikzstyle{undirected}=[thick]
    \tikzstyle{directed}=[thick,->]
    
    \node[host] (h1) {$h_1$};
    \node[]     at([yshift=-0.5cm,xshift=0.5cm]h1)  (h1T){\tiny $TE=0$};
    \node[switch, below of=h1] (s1) {$s_1$};
    \node[]     at([yshift=0.5cm,xshift=0.5cm]s1)   (s1ti) {\tiny $TI=1$};
    \node[]     at([yshift=-0.25cm,xshift=-0.75cm]s1)  (s1tes2) {\tiny $TE=2$};
    \node[]     at([yshift=-0.25cm,xshift=0.75cm]s1)  (s1tes3) {\tiny $TE=3$};
    \node[switch, below left of=s1] (s2) {$s_2$};
    \node[]     at([yshift=0.5cm]s2)   (swti) {\tiny $TI=4$};
    \node[switch, below right of=s1] (s3) {$s_3$};
    \node[]     at([yshift=0.5cm]s3)   (swti) {\tiny $TI=5$};
    
    \path   (h1)    edge[directed]  node[above] {}  (s1)
            (s1)    edge[directed]  node[above] {}  (s2)
            (s1)    edge[directed]  node[above] {}  (s3);
    
    \end{tikzpicture}
    }
    \subfloat[t][Stage 8\label{fig:ft8}]
    {
    \begin{tikzpicture}[node distance=1.5cm,>=stealth',bend angle=45,auto]
    \tikzstyle{switch}=[circle,thick,draw=black!75,minimum size=6mm]
    \tikzstyle{sut}=[circle,thick,draw=black!75,pattern=dots,minimum size=6mm]
    \tikzstyle{host}=[circle,thick,draw=black!75,fill=black!80,minimum size=4mm,text=white]
    \tikzstyle{undirected}=[thick]
    \tikzstyle{directed}=[thick,->]
    
    \node[host] (h1) {$h_1$};
    \node[]     at([yshift=-0.5cm,xshift=0.5cm]h1)  (h1T){\tiny $TE=0$};
    \node[switch, below of=h1] (s1) {$s_1$};
    \node[]     at([yshift=0.5cm,xshift=0.5cm]s1)   (s1ti) {\tiny $TI=1$};
    \node[]     at([yshift=-0.25cm,xshift=-0.75cm]s1)  (s1tes2) {\tiny $TE=2$};
    \node[]     at([yshift=-0.25cm,xshift=0.75cm]s1)  (s1tes3) {\tiny $TE=3$};
    \node[switch, below left of=s1] (s2) {$s_2$};
    \node[]     at([yshift=0.5cm]s2)   (swti) {\tiny $TI=4$};
    \node[switch, below right of=s1] (s3) {$s_3$};
    \node[]     at([yshift=0.5cm]s3)   (swti) {\tiny $TI=5$};
    \node[host, below of=s2] (h2) {$h_2$};
    \node[]     at([yshift=-0.5cm,xshift=-0.5cm]s2)  (s2tes6) {\tiny $TE=6$};
    \node[]     at([yshift=-0.5cm,xshift=0.5cm]s3)  (s3tes7) {\tiny $TE=7$};
    
    \node[host, below of=s3] (h3) {$h_3$};
    
    \path   (h1)    edge[directed]  node[above] {}  (s1)
            (s1)    edge[directed]  node[above] {}  (s2)
            (s1)    edge[directed]  node[above] {}  (s3)
            (s2)    edge[directed]  node[above] {}  (h2)
            (s3)    edge[directed]  node[above] {}  (h3)
            ;
    
    \end{tikzpicture}
    }
    \caption{Flow tree construction example}
    \label{fig:flowtree}
\end{figure}

Algorithm~\ref{algo:dpfromos} always terminates. There are three possibilities of how it terminates. First, if there is an observation that occurs at a given time, and in the flow tree there is no neighboring node with a smaller corresponding time, then Algorithm~\ref{algo:dpfromos} does not attach it to the tree and returns an error; the reason is that a data-path must be connected (according to its definition), and in our assumptions all interfaces are monitored. Identifying a disconnected data-path may imply that a P.O. in a preceding forwarding device does not report the packet to the processing server. This may be an indication of an attack, however, determining such issues is out of the scope of this paper. The second case occurs when a given observation occurs at a node that has been previously traversed, and that eventually produces an infinite loop. Although a data-path can be infinite (and theoretically observations too), from the practical standpoint it serves no purpose to report such data-paths. Therefore, when a loop is detected Algorithm~\ref{algo:dpfromos} terminates with a loop error. Finally, for any set of observations in which all its observations have a connecting neighbor with an appropriate chronological order, and in which no loops occur, Algorithm~\ref{algo:dpfromos} appends a node in a tree data structure, at the corresponding level. Ultimately, the flow tree is transformed to a set of edge sequences. Such edge sequences have the following properties: i) they belong to $E$ as the algorithm queries the corresponding edges in $D$ before adding each node; ii) they are connected, as otherwise the algorithm returns an error; and, finally (iii) they start at host $h$ as the root of flow tree is set to it at Step 2. The previous properties held by the returned elements of Algorithm~\ref{algo:dpfromos} guarantee that the returned sequences of edges are data-paths by definition. Thus, guaranteeing the correctness of the algorithm. Therefore, the following statement holds.

\begin{proposition}\label{stmt:correctness}
Algorithm~\ref{algo:dpfromos} always terminates, and a set of data-paths $\Pi$ is given for a set of observations $\Omega$ that corresponds to a finite and connected network packet traversal at a data-plane $D$, starting at host $h$ (in $D$).
\end{proposition}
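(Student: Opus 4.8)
The plan is to prove the two claims of the proposition separately: first that Algorithm~\ref{algo:dpfromos} halts on every input, and second that, under the hypothesis that $\Omega$ arises from a finite and connected traversal, the returned object is a genuine set of data-paths in the sense of Definition~\ref{def:data-path}.

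For termination, the key observation is that the only source of unboundedness is the outer \textbf{Repeat} loop of Step~3. Each pass through that loop consumes exactly one observation (popped at Step~3.3, with the initial one popped at Step~1), and since the traversal is finite the set $\Omega_{tc}$ is finite; hence the outer loop executes at most $|\Omega_{tc}|$ times. It remains to argue that each individual pass terminates. The only internal repetition is the ``\emph{set the current level to the parents $\ldots$ and go to Step~3.2}'' branch, which strictly moves the current level one step closer to the root $R$. Since the flow tree built so far has finite depth (it has received at most $|\Omega_{tc}|$ nodes), this ascent reaches the root after finitely many steps, at which point either a node is attached or the disconnected-data-path error is raised. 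The loop-detection subroutine is a finite walk along a single root-to-node branch and therefore also halts. This establishes unconditional termination.

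For correctness I would set up a loop invariant on the flow tree $T$: after each pass of Step~3, every root-to-node path in $T$ is a connected sequence of directed edges of $D$ starting at $h$, and the recorded times $TI$/$TE$ are chronologically consistent along each branch. The invariant is preserved because a child $C$ labelled $s$ is attached to a node $N$ labelled $v$ only after the algorithm verifies that $s$ is the neighbour reached from $v$ through interface $i$ in $D$ (so $\{v,s\}\in E$), and only when the egress time $t$ exceeds the relevant ingress time (so the chronology is respected); branching at a node models a cloned packet, consistent with Assumption~\ref{ass:traffic}. Under the hypothesis that $\Omega$ comes from a \emph{connected} traversal, every popped observation has a matching neighbour already present in $T$ at some ancestor level, so no observation triggers the disconnected-data-path error; under finiteness no repeated edge occurs, so the loop error is never raised. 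Thus the algorithm reaches Step~4, and the depth-first enumeration returns exactly the root-to-leaf branches of $T$.

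Finally I would verify the three defining properties of Definition~\ref{def:data-path} for each returned sequence, which follow directly from the invariant: (i) each edge lies in $E$ because it was added only after querying $D$; (ii) consecutive edges share their middle vertex because each sequence is a path in the tree; and (iii) the first edge emanates from $h$ because $R$ is labelled $h$ by Step~2. Hence every returned sequence is a data-path and, by Proposition~\ref{stm:dpfromobs}, the returned set is precisely $\Pi_D(h,\mathcal{H}(p))$. The main obstacle I anticipate is the correctness invariant rather than termination: one must argue carefully that the level-climbing search attaches each observation to the \emph{unique} correct position in the tree (using the $TI$/$TE$ timestamps to disambiguate when the same switch label appears on several branches), and that the ``connected traversal'' hypothesis is exactly what guarantees such a position always exists. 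Making this disambiguation precise, and confirming that it matches the informal construction illustrated in Figure~\ref{fig:flowtree}, is where the real work lies.
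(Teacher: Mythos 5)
Your proposal is correct and follows essentially the same route as the paper, whose (informal, pre-stated) argument likewise proceeds by analyzing the three ways Algorithm~\ref{algo:dpfromos} can terminate and then checking that the returned edge sequences satisfy the three defining properties of Definition~\ref{def:data-path} (edges queried from $E$, connectedness, root labelled $h$). If anything, your version is tighter than the paper's: the paper merely enumerates the exit conditions, whereas you actually bound the outer loop by $|\Omega_{tc}|$ and the level-climbing by the tree depth, and you make the correctness invariant explicit.
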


Now we turn our attention to constructing the distributed test cases (and eventually the test suites) that guarantee the observation (discovery) of data-paths. Given the Assumption~\ref{ass:traffic}, Proposition~\ref{stm:dpfromobs} implies the following statement:

\begin{corollary}\label{stm:dpsfromtc}
    A set of data-paths $\Pi_D(h,\mathcal{H}(p))$ in a data-plane $D=(H\cup S, E, \mathcal{I})$ is observed iff the distributed test case $tc_D=(h,\mathcal{H}(p))$ is executed.
\end{corollary}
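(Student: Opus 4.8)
The plan is to prove the two implications of the biconditional separately, leaning on Proposition~\ref{stm:dpfromobs} for the sufficiency direction and on the determinism encoded in Assumption~\ref{ass:traffic} for the necessity direction. I would open by noting that, because test traffic is distinguishable from ordinary data (Assumption~\ref{ass:fixedconfig}), every observation I reason about can be unambiguously attributed to the test case under consideration.

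For the sufficiency direction ($\Leftarrow$), I would argue that executing $tc_D=(h,\mathcal{H}(p))$ injects the packet $p$ at host $h$, and by the reliability of the links together with the full-monitoring hypothesis (Assumptions~\ref{ass:traffic} and \ref{ass:access}) every interface traversed by $p$ yields an observation, producing a set $\Omega_{tc}$. Proposition~\ref{stm:dpfromobs} then guarantees that from $\Omega_{tc}$ the set $\Pi_D(h,\mathcal{H}(p))$ is computed, so executing the test case suffices to observe the data-path set.

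For the necessity direction ($\Rightarrow$), I would show that the only way to observe $\Pi_D(h,\mathcal{H}(p))$ is to run precisely this test case. By the third property in Definition~\ref{def:data-path}, every path in $\Pi_D(h,\mathcal{H}(p))$ has first edge $(h,s_h)$, so the induced traffic originates at host $h$; since forwarding devices do not generate data (Assumption~\ref{ass:nodestype}), the packet must have been injected at a host, and the common first edge forces that host to be $h$. To pin down the header I would invoke Assumption~\ref{ass:traffic}: forwarding is deterministic in the pair (input port, header), so the entire collection of paths induced by a packet injected at $h$ is a deterministic function of its header; consequently $\Pi_D(h,\mathcal{H}(p))$ is exactly the image of $\mathcal{H}(p)$ under this function, and observing it forces the injected header to be $\mathcal{H}(p)$. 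Hence the executed test case is $(h,\mathcal{H}(p))=tc_D$.

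I expect the necessity direction to be the main obstacle. The delicate point is that the map from headers to realized edge sequences need not be injective: two distinct headers could match the same flow rules throughout $D$ and thus induce identical traffic, so I must reason at the level of the indexed set $\Pi_D(h,\mathcal{H}(p))$ rather than its raw realization as a collection of edges. The determinism of Assumption~\ref{ass:traffic} is exactly what makes $\Pi_D(h,\mathcal{H}(p))$ a well-defined function of the pair $(h,\mathcal{H}(p))$, and it is this functional dependence, combined with the host-origin property of Definition~\ref{def:data-path} and the distinguishability of test traffic (Assumption~\ref{ass:fixedconfig}), that upgrades the computability afforded by Proposition~\ref{stm:dpfromobs} into the claimed equivalence.
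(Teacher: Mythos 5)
Your proof is correct and takes essentially the approach the paper intends: the paper gives no explicit proof at all, asserting only that Proposition~\ref{stm:dpfromobs} together with Assumption~\ref{ass:traffic} implies the corollary, and your two directions (Proposition~\ref{stm:dpfromobs} plus link reliability and full monitoring for sufficiency; the host-origin property of Definition~\ref{def:data-path}, the fact that only hosts generate traffic, and determinism of forwarding for necessity) are precisely the ingredients that implicit argument relies on. Your caveat about the non-injectivity of the header-to-paths map---and the resulting need to read ``observed'' at the level of the indexed set $\Pi_D(h,\mathcal{H}(p))$ rather than its raw edge realization---is a genuine subtlety the paper silently glosses over, and your resolution of it is the right one.
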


An immediate question follows: is there any distributed test suite that guarantees discovering all data-paths of a given type? Further, is there any distributed test suite that guarantees discovering all data-paths in a data-plane? To reply to these questions, let us first consider how to discover all data-paths for a given traffic type. According to our assumptions (particularly Assumption~\ref{ass:nodestype}), only hosts generate traffic, and therefore data-paths can only start at hosts (see Definition~\ref{def:data-path}). Thus, similarly to Corollary~\ref{stm:dpsfromtc}, provided that all data-paths of type $\tau$ are observed then a test suite $TS_D=\{(h,\mathcal{H}(p))| \mathcal{H}(p)\in\{0,1\}^k \wedge 1_\tau(\mathcal{H}(p))=1 \wedge h\in H\}$, containing a test case for each host in $D$ and for each packet header of type $\tau$ has been executed. Likewise, executing such a test suite guarantees observing all data-paths of type $\tau$. Therefore, the following statement holds.

\begin{proposition}
    All data-paths in the data-plane $D=(H\cup S, E, \mathcal{I})$ (the set $\Pi_D$) are observed iff the distributed test suite $TS_D=\{(h,\mathcal{H}(p))| \mathcal{H}(p)\in\{0,1\}^k \wedge h\in H\}$ is executed.
\end{proposition}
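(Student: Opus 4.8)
The plan is to reduce the statement to Corollary~\ref{stm:dpsfromtc} by decomposing $\Pi_D$ along hosts and headers. First I would observe that, since by Definition~\ref{def:data-path} every data-path begins at some host $h$ with first edge $(h,s_h)$ and is determined by a header $\mathcal{H}(p)$, the set of all data-paths factorizes as
\[
    \Pi_D=\bigcup_{h\in H}\ \bigcup_{\mathcal{H}(p)\in\{0,1\}^k}\Pi_D(h,\mathcal{H}(p)).
\]
The index set of this double union is exactly the collection of pairs constituting $TS_D=\{(h,\mathcal{H}(p))\mid \mathcal{H}(p)\in\{0,1\}^k \wedge h\in H\}$. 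Hence ``observing $\Pi_D$'' amounts to observing every component $\Pi_D(h,\mathcal{H}(p))$, and each component is indexed by precisely one test case of $TS_D$.

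Then I would prove the two implications separately, both by appeal to the biconditional of Corollary~\ref{stm:dpsfromtc}. For sufficiency ($\Leftarrow$), I would assume $TS_D$ is executed; then each of its test cases $(h,\mathcal{H}(p))$ is executed, so by the Corollary each $\Pi_D(h,\mathcal{H}(p))$ is observed, and therefore their union $\Pi_D$ is observed. For necessity ($\Rightarrow$), I would assume all of $\Pi_D$ is observed; fixing an arbitrary pair $(h,\mathcal{H}(p))$, its component $\Pi_D(h,\mathcal{H}(p))\subseteq\Pi_D$ is then observed, and the Corollary forces the test case $(h,\mathcal{H}(p))$ to have been executed. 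Since the pair was arbitrary, every test case of $TS_D$ is executed. Combining the two directions yields the equivalence, and the observation that $TS_D$ is literally the set of all such pairs closes the quantifier argument.

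The main obstacle is the necessity direction, and specifically guaranteeing that no test case of $TS_D$ is redundant. The subtlety is that two distinct headers generated at the same host could, under some forwarding configuration, induce the very same set of edge-sequences; read purely as sets of edges, observing one such component would already ``observe'' the other, and one might hope to cover $\Pi_D$ with a proper subset of $TS_D$. What rules this out is Assumption~\ref{ass:traffic} (deterministic forwarding based on input port and header) together with the exact correspondence asserted by Corollary~\ref{stm:dpsfromtc}: each labelled path-set $\Pi_D(h,\mathcal{H}(p))$ is observed \emph{iff} its matching test case is executed, so observation of the component indexed by $(h,\mathcal{H}(p))$ is tied to the generation of a packet with header $\mathcal{H}(p)$ at $h$ and cannot be discharged by any other test case. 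I would therefore be careful to invoke the Corollary at the level of the indexed components rather than at the level of raw edge-sequences, after which the union decomposition and the quantifier argument over all $(h,\mathcal{H}(p))$ complete the proof.
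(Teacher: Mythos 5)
Your proof is correct and takes essentially the same route as the paper: the paper derives this proposition informally in the text preceding it, using exactly your ingredients — data-paths start only at hosts (Assumption~\ref{ass:nodestype}, Definition~\ref{def:data-path}), so $\Pi_D$ decomposes over pairs $(h,\mathcal{H}(p))$, and both directions of Corollary~\ref{stm:dpsfromtc} then transfer componentwise to the whole test suite (the paper states this first for a fixed traffic type $\tau$ and then lets $\tau$ range over all headers). Your explicit handling of the necessity direction at the level of indexed components simply makes rigorous what the paper leaves implicit.
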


As usual, it is interesting to estimate the length of the distributed test suites which guarantee the discovery of data-paths. We note that the distributed test suite that guarantees discovering all paths of a particular traffic header is a union of all data-paths with that header, and therefore its length is $|H|$. The length of the test suite that guarantees discovering all data-paths of a given type of traffic $\tau$ highly depends on the form of the traffic type indicator functions. If restricting the traffic type indicator functions only to conjunctions in the form $parameter\;i=value\;i$, the length of the test suite of interest is $2^{k-k'}|H|$, where $k'$ is the length of the parameters involved in $1_\tau$. Likewise, the distributed test suite that guarantees discovering all data-plane is  $2^k|H|$. This result sounds unpromising, considering that the length of the parameters involved (the relevant header parameters) are in the order of hundreds. Nonetheless, discovering the data-paths is often focused on \emph{interesting traffic}, i.e., on particular headers or traffic types. In our implementation (see Section~\ref{sec:tool}), we focus also on interesting traffic.

An interesting detail for implementations is the calculation of the maximal number of packets a monitoring system should collect after executing a test case. Perhaps the most interesting is the estimation of an upper bound for the cardinality of a set of data-paths. With this upper bound, the monitoring systems can decide when to stop the monitoring process, after the collection of a number of packets there is a guarantee there exists a loop in the SDN configuration (and therefore no need to continue monitoring).

\begin{proposition}\label{stm:upperbounds}
    The number of finite data-paths $\Pi_D(h,\mathcal{H}(p))$ in a data-plane $D=(V, E, \mathcal{I})$ does not exceed $(|V|-1)^{|V|(|V|-1)}$, and the length of each path does not exceed $|V|(|V|-1)$.
\end{proposition}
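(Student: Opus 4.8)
The plan is to treat the two claims separately, establishing the length bound first since it controls the depth of the tree used in the counting argument. For the length bound, I would invoke the remark following Definition~\ref{def:data-path}: in a data-path a repeated directed edge forces a loop and hence infinite length. Consequently a \emph{finite} data-path traverses each directed edge of $D$ at most once. Since $D$ is a simple graph (Assumption~\ref{ass:conntype} guarantees a single link, and thus a single undirected edge, between any pair of nodes), it has at most $\binom{|V|}{2}$ undirected edges, each giving rise to exactly two directed edges. Hence $D$ has at most $2\binom{|V|}{2}=|V|(|V|-1)$ distinct directed edges, and no finite data-path can exceed length $|V|(|V|-1)$.

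For the counting bound, I would reuse the \emph{flow tree} structure already associated with Algorithm~\ref{algo:dpfromos}. The set $\Pi_D(h,\mathcal{H}(p))$ is precisely the collection of root-to-leaf paths of the flow tree rooted at $h$: an internal node is a forwarding device that relays (and possibly clones) the packet, whereas a leaf is where a trajectory terminates (reaching a host or being dropped). Two structural facts drive the estimate. First, the tree's depth equals the maximal data-path length, which by the previous paragraph is at most $L:=|V|(|V|-1)$. Second, a node labelled by a vertex $v$ has one child per output neighbour to which the packet is forwarded, and since $v$ has at most $|V|-1$ neighbours in the simple graph $D$, every node has at most $b:=|V|-1$ children.

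It then remains to bound the number of leaves of a rooted tree with branching factor at most $b$ and depth at most $L$ by $b^{L}$, after which substituting $b=|V|-1$ and $L=|V|(|V|-1)$ yields $|\Pi_D(h,\mathcal{H}(p))|\le (|V|-1)^{|V|(|V|-1)}$. I would prove the leaf bound by induction on $L$: a tree of depth $0$ is a single leaf ($1\le b^{0}$), and for the inductive step the root has at most $b$ subtrees, each of depth at most $L-1$ and hence, by hypothesis, at most $b^{L-1}$ leaves, so the whole tree has at most $b\cdot b^{L-1}=b^{L}$ leaves.

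The main obstacle is exactly this leaf-counting step: the naive estimate that sums the at-most-$b^{d}$ nodes over all depths $d\le L$ overcounts, since it gives a geometric sum strictly larger than $b^{L}$, and leaves may legitimately appear at several different depths. The clean resolution is to avoid counting level by level and instead recurse on subtrees as above, so that a shallow subtree still contributes at most $b^{L-1}$ leaves while the root spawns at most $b$ subtrees. Care must also be taken to justify that the elements of $\Pi_D(h,\mathcal{H}(p))$ are precisely the \emph{maximal} trajectories (root-to-leaf paths) and not their proper prefixes, so that counting leaves counts data-paths exactly.
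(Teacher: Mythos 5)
Your proof is correct and follows essentially the same route as the paper's: finiteness forbids repeated directed edges, giving at most $|V|(|V|-1)$ distinct directed edges and hence the length bound, and the per-node branching factor of $|V|-1$ raised to that maximal depth gives the count. The only difference is one of rigor: your subtree induction for the leaf bound cleanly handles leaves occurring at different depths, whereas the paper's informal ``if cloning at each node, there are $(|V|-1)^l$ nodes'' step silently counts nodes at the deepest level as if every data-path terminated there.
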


\begin{proof}
    A data-path without loops has no repeated edges, otherwise, the path is infinitely repeated. Thus, there exists at most $|V|(|V| - 1)$ distinct edges, and therefore, a longest branch of a path has at most $|V|(|V| - 1) + 1$ nodes. Similarly, a node has at most $|V| - 1$ edges. Therefore, if cloning at each node, there are $(|V|-1)^l$ nodes, where $l$ is the current length of the data-path (starting at $0$). Thus, the number of data-paths is at most $(|V|-1)^{|V|(|V|-1)}$, and their maximal length is $|V|(|V|-1)$. 
\end{proof}

We note that studying the reachability of these upper bounds is an interesting question by itself. However, studying tighter bounds is left for future work.

\section{Experimental results}\label{sec:tool}
In this section, we discuss the developed set of tools implementing the proposed methods, the experimental setup and the obtained experimental results.


\paragraph{Tools description} As previously discussed, our approach is based on distributed test case generation and monitoring. Five different tools (or modules) were developed, those are: (i) the extraction tool - a network packet sniffer which is installed at all network interfaces of the data-plane forwarding devices (POs), that forwards the traffic to the \emph{analyzer tool}; (ii) the packet generation tool - a raw packet generator which is installed at all hosts in the data-plane, that generates packets with specific headers (in the widespread pcap-filter syntax \cite{pcap-filter}) and a Unique Identifier~(UID) as the payload of the packet (to distinguish the test packets); (iii) the analyzer tool - a tool installed at the processing server that receives the packets, filters the packets having the UID and computes the flow trees from the set of observations; (iv) the orchestration tool - receives the traffic of interest to generate, requests the packet generation tool to send a network packet with the requested headers using a specific UID; and finally (v) the User Interface~(UI) tool - a web interface for discovering the data-paths that receives the interesting traffic to discover from the user and reports the resulting data-paths (in the form of a flow tree). In Figure~\ref{fig:toolsinteraction}, we show how the tools interact between each other. The process starts with a user input, asking the UI to generate a packet with a specific header. Then, the UI forwards the request to the orchestrator and based on that header the orchestrator sends the data-plane model to the analyzer, the packets to filter to the extractions agents, and the packets to generate to the generation agents. The generation tool generates the appropriate packet and those packets are sent to the forwarding devices, and eventually the extraction tool captures, and forwards them to the analyzer. The analyzer computes the corresponding paths, and sends the information back directly to the UI, which displays the flow tree.

\begin{figure*}[!htb]
    \centering
\begin{tikzpicture}
    \node[rectangle, draw, rounded corners, fill=white, text=black, very thin, minimum height=0.8cm, minimum width=0.8cm] (gui) {UI};
    \node[] at ([xshift=-2.5cm]gui) (input) {};
    \node[rectangle, draw, rounded corners, fill=white, text=black, very thin, minimum height=0.8cm, minimum width=0.8cm] at ([yshift=-1.5cm]gui) (orch) {Orchestration};
    \node[rectangle, draw, rounded corners, fill=white, text=black, very thin, minimum height=0.8cm, minimum width=0.8cm] at ([yshift=-1cm,xshift=-4cm]orch) (pgen1) {Packet gen. $1$};
    \node[rotate=90] at ([yshift=-1.5cm]pgen1) (pgendots) {\ldots};
    \node[rectangle, draw, rounded corners, fill=white, text=black, very thin, minimum height=0.8cm, minimum width=0.8cm] at ([yshift=-1.5cm]pgendots) (pgenx) {Packet gen. $x$};
    \node[rectangle, draw, rounded corners, fill=white, text=black, very thin, minimum height=0.8cm, minimum width=0.8cm] at ([yshift=-1cm,xshift=4cm]orch) (pextr1) {Extraction $1$};
    \node[rotate=90] at ([yshift=-1.5cm]pextr1) (pextrdots) {\ldots};
    \node[rectangle, draw, rounded corners, fill=white, text=black, very thin, minimum height=0.8cm, minimum width=0.8cm] at ([yshift=-1.5cm]pextrdots) (pextry) {Extraction $y$};
    \node[rectangle, draw, rounded corners, fill=white, text=black, very thin, minimum height=0.8cm, minimum width=0.8cm] at ([xshift=4cm]pextrdots) (analyzer) {Analyzer};

    \path
            (input) edge [thick,->] node[above, align=center] {User Req. \\$\mathcal{H}(p)$}   (gui)
            (gui) edge [thick,->] node[left, align=center] {Discover req. $\mathcal{H}(p)$}   (orch)
            (orch) edge [thick,->] node[left, align=center] {Gen. $\mathcal{H}(p)$,\emph{UID}\\}   (pgen1)
            (orch) edge [thick,->] node[left, align=center] {Gen. $\mathcal{H}(p)$,\emph{UID}}   (pgenx)
            (orch) edge [thick,->] node[right, align=center] {Filter $\mathcal{H}(p)$,\emph{UID}\\}   (pextr1)
            (orch) edge [thick,->] node[right, align=center] {Filter $\mathcal{H}(p)$,\emph{UID}}   (pextry)
            (pextr1) edge [thick,->] node[above, align=center] {Observation\\}   (analyzer)
            (pextry) edge [thick,->] node[below, align=center] {\\Observation}   (analyzer)
            (orch.east) edge [thick,->, bend left=45] node[above, align=center] {Analyze $D$\\}   (analyzer.north)
            (analyzer.east) edge [thick,->, bend right=55] node[right, align=center, above] {Res. $\Pi_D(\mathcal{H}(p))$\\(as a \emph{flow tree})\\}   (gui.east)
            (pgen1) edge [thick,->] node[right, align=center, above] {$p$}   (pextr1)
            (pgen1) edge [thick,->] node[right, align=center, above] {$p$}   (pextry)
            (pgenx) edge [thick,->] node[right, align=center, below] {$p$}   (pextr1)
            (pgenx) edge [thick,->] node[right, align=center, above] {$p$}   (pextry)
    ;
\end{tikzpicture}
    \caption{Tool interaction diagram}\label{fig:toolsinteraction}
\end{figure*}
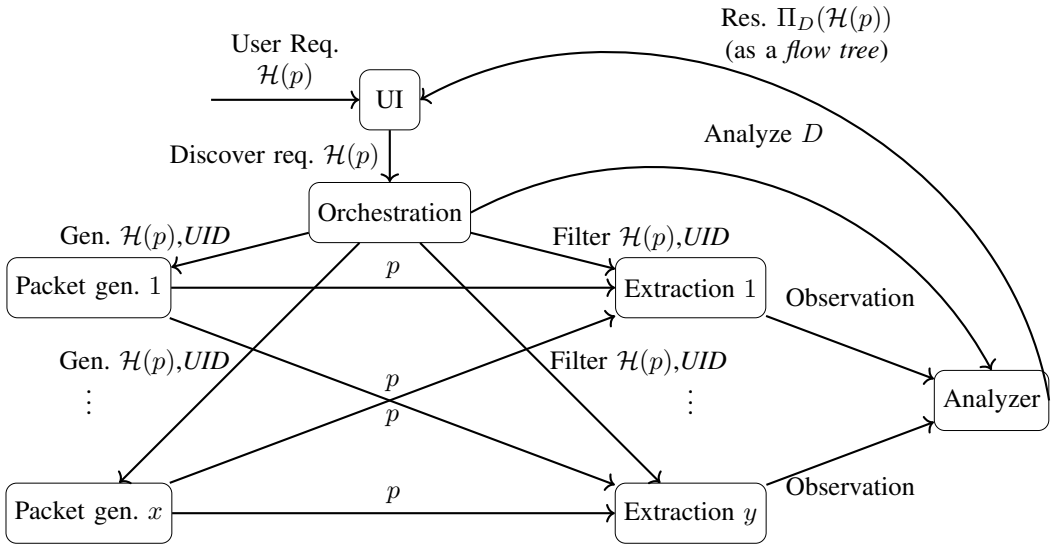

The tools have been developed in different programming languages, including Golang, Python and Javascript. For further details of the implementations, please visit the official repository\footnote{https://github.com/letitbeat/data-path-discovery.}. Likewise, the interested reader can see our short data-path discovery demo\footnote{\label{fn:demo}https://vimeo.com/307046352.}.

\paragraph{Experimental setup} In order to perform our experiments, different data-planes have been simulated using the well-known Mininet simulator, and more precisely, Containernet \cite{containernet} has been used, in order to use docker containers with pre-installed tools. Additionally, the ONOS controller was used for all the experiments. In order to validate our toolset, experiments with installed flow rules were performed. Different data-paths have been configured for different topologies; the flow rules (or intents) were added through the ONOS REST API, the ONOS command line interface or its web (graphical) interface. Our tools are capable of successfully discovering the data-paths of interest (as configured) 100\% of the time. Further, the observed data-paths coincide with the requested ones  (in exception of data-paths containing known bugs for the architecture \cite{ictss18}). An example can be seen in our data-path discovery video demonstration.

\paragraph{Experiments with the ONOS Reactive Forwarding Application} Perhaps the most valuable contribution of our approach is to discover what real applications implement as SDN data-paths. To that end, a well-known SDN application has been installed, namely the ONOS Reactive Forwarding~(ORF) application; Reactive forwarding  ``refers to the mechanism used to install forwarding entries into the network switches - those entries are installed on-demand after a sender starts transmitting packets.'' \cite{reactiveforward}. The application has been installed in our experimental setup, the hosts in the data-plane successfully communicate between them. However, after the discovery process has been executed an unexpected result has been encountered. Indeed, the application does not compute a data-path to forward the network traffic; the application floods the network, i.e., forwarding devices send a received packet to all its neighbors in the data-plane, in exception of the sending device. As an example, consider the data-plane shown in Figure~\ref{fig:data-plane}, a message (with header matching destination TCP port = 22) from $h_1$ to $h_2$ follows the set of data-paths shown in (as a flow tree) Figure~\ref{fig:reactive_observed} while the expected is to have a single data-path, a possibility is shown in Figure~\ref{fig:reactive_expected}.  The previous behavior may be a functional error in the application or a miss-configuration, and in this paper, we do not focus on the cause, however, we highlight the value of the proposed approach in discovering the operational data-paths. The operational data-paths as configured by the ORF application may violate data secrecy, overload the network with unnecessary network packets, and may not conform to the reactive forward specification; both functional and non-functional errors may be detected using our proposed tools and methods.

\begin{figure}[!htb]
    \centering
    \subfloat[t][Operational Data-paths\label{fig:reactive_observed}]
    {
    \begin{tikzpicture}[node distance=1cm,>=stealth',bend angle=45,auto]
    \tikzstyle{switch}=[circle,thick,draw=black!75, inner sep=0.075cm]
    \tikzstyle{host}=[circle,thick,draw=black!75,fill=black!80, text=white, inner sep=0.075cm]
    \tikzstyle{undirected}=[thick]
    \tikzstyle{directed}=[thick,->]
    \node[host] (h1) {$h_1$};
    \node[switch, below of=h1] (s1) {$s_1$};
    \node[switch, below of=h1] (s1) {$s_1$};
    \node[switch] at([yshift=-0.5cm,xshift=-1.1cm]s1) (s2) {$s_2$};
    \node[switch]  at([yshift=-0.5cm,xshift=1.1cm]s1) (s3) {$s_3$};
    \node[host, below left of=s2] (h2) {$h_2$};
    \node[switch, below of=s2] (s4) {$s_4$};
    \node[switch, below right of=s2] (s3_2) {$s_3$};
    \node[host, below right of=s3] (h3) {$h_3$};
    \node[switch, below of=s3] (s2_2) {$s_2$};
    \node[switch, below left of=s3] (s4_2) {$s_4$};
    \node[host, below left of=s4] (h4) {$h_4$};
    \node[switch, below right of=s4] (s3_3) {$s_3$};
    
    \path   (h1)    edge[directed]    node[above] {}   (s1)
            (s1)    edge[directed]    node[above] {}   (s2)
            (s1)    edge[directed]    node[above] {}   (s3)
            (s2)    edge[directed]    node[above] {}   (h2)
            (s2)    edge[directed]    node[above] {}   (s4)
            (s2)    edge[directed]    node[above] {}   (s3_2)
            (s3)    edge[directed]    node[above] {}   (h3)
            (s3)    edge[directed]    node[above] {}   (s4_2)
            (s3)    edge[directed]    node[above] {}   (s2_2)
            (s4)    edge[directed]    node[above] {}   (s3_3)
            (s4)    edge[directed]    node[above] {}   (h4)
    ;
    \end{tikzpicture}
    }
    \vrule\quad
    \subfloat[t][``Expected'' Data-path\label{fig:reactive_expected}]
    {
    \begin{tikzpicture}[node distance=1cm,>=stealth',bend angle=45,auto]
    \tikzstyle{switch}=[circle,thick,draw=black!75, inner sep=0.075cm]
    \tikzstyle{host}=[circle,thick,draw=black!75,fill=black!80, text=white, inner sep=0.075cm]
    \tikzstyle{undirected}=[thick]
    \tikzstyle{directed}=[thick,->]
    
    \node[host] (h1) {$h_1$};
    \node[switch, below of=h1] (s1) {$s_1$};
    \node[switch, below of=s1] (s2) {$s_2$};
    \node[host, below of=s2] (h2) {$h_2$};
    
    \node[] at([xshift=-1.2cm]h1) (spacer1) {};
    \node[] at([xshift=1.2cm]h1) (spacer2) {};
    
    \path   (h1)    edge[directed]    node[above] {}   (s1)
            (s1)    edge[directed]    node[above] {}   (s2)
            (s2)    edge[directed]    node[above] {}   (h2)
    ;
    
    \end{tikzpicture}
    }
    \caption{Observed (operational) vs. expected data-paths of the ONOS reactive forwarding application}
    \label{fig:datapathcomparisson}
\end{figure}
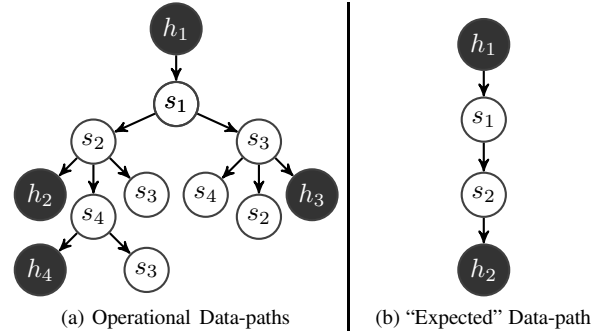

\paragraph{Performance Evaluation}
Although the performance of proposed approach could be evaluated theoretically, performing experiments on our experimental setup can reveal real processing times. In general, after executing a distributed test case, the time required to discover the data-paths is greatly influenced by: (i) the collection of observations and, (ii) the algorithm's (Algorithm~\ref{algo:dpfromos}) processing time. In theory, the required time to collect the observations should be a linear function that depends of the length of a largest data-path. This under the assumption that an observation arrives to the centralized monitor as the network packet traverses the data-plane. Further, the delay between the forwarding devices has the largest influence on processing time. Similarly, the time the algorithm takes to process the set of observations should be a polynomial function that depends on the number of edges in a data-path and the number of data-paths itself (see Proposition~\ref{stm:upperbounds}). 

In order to estimate the execution time of both the data collection and the algorithm processing different topologies have been configured and likewise, different data-paths with different lengths have been configured (up to big data-paths of length $75$). As can be seen in Figure~\ref{fig:perfeval}, the time required to discover data-paths is mostly influenced by the data-collection (and the packet traversal itself in the data-plane). The previous results show that our approach may be applicable, even for identifying a somewhat large number of operational data-paths. Note that the the processing time of the algorithm is negligible in comparison with the data collection time penalty, and therefore, its increase is barely noticeable in the reported plot. 
\begin{figure}[!htb]
    \centering
\begin{tikzpicture}
    \begin{axis}[
        xlabel=Data-path length,
        ylabel=Time (ms),
        legend style={at={(0.02,0.98)}, anchor=north west},
        ytick = {0, 1000, 2000, 3000, 4000, 5000, 6000, 7000, 8000}
        ]

    \addplot[smooth,mark=*, mark size=0.8pt, black] plot coordinates {
        (2,100.976)
        (3,202.306)
        (4,306.47)
        (5,402.6)
        (6,504.384)
        (7,608.43)
        (8,707.935)
        (9,581.710032)
        (10,616.75)
        (11,1031)
        (12,957.084568)
        (13,1212)
        (14,1020)
        (15,1415)
        (16,1519)
        (17,1644)
        (18,1738)
        (19,1819)
        (20,1951)
        (21,2072)
        (22,2132)
        (23,2223)
        (24,2367)
        (25,2439)
        (26,2546)
        (27,2635)
        (28,2774)
        (29,2816)
        (30,2890)
        (31,3016)
        (32,3124)
        (33,3271)
        (34,3352)
        (35,3470)
        (36,3587)
        (37,3658)
        (38,3758)
        (39,3885)
        (40,4009)
        (41,4035)
        (42,4157)
        (43,4290)
        (44,4360)
        (45,4513)
        (46,4539)
        (47,4699)
        (48,4815)
        (49,4894)
        (50,4989)
        (51,5140)
        (52,5169)
        (53,5299)
        (54,5442)
        (55,5540)
        (56,5648)
        (57,5772)
        (58,5729)
        (59,5892)
        (60,6137)
        (61,6090)
        (62,6233)
        (63,6351)
        (64,6456)
        (65,6555)
        (66,6612)
        (67,6829)
        (68,6874)
        (69,6954)
        (70,7062)
        (71,7188)
        (72,7359)
        (73,7681)
        (74,7507)
        (75,7562)
    };
    \addlegendentry{Data Collection Time}
    
    \addplot[smooth,black, mark=*, mark size=0.3pt] plot coordinates {
        (2,32.56)
(3,30.06)
(4,133.15)
(5,84.40)
(6,36.43)
(7,40.44)
(8,55.44)
(9,62.10)
(10,76.50)
(11,92.80)
(12,65.14)
(13,85.72)
(14,89.36)
(15,60.13)
(16,101.45)
(17,94.90)
(18,89.73)
(19,111.49)
(20,75.62)
(21,114.14)
(22,103.71)
(23,105.58)
(24,201.88)
(25,127.46)
(26,126.09)
(27,130.88)
(28,135.82)
(29,121.12)
(30,173.68)
(31,164.53)
(32,149.88)
(33,170.36)
(34,118.08)
(35,95.09)
(36,126.73)
(37,114.11)
(38,140.7239)
(39,95.9316)
(40,117.076)
(41,122.498)
(42,132.3868)
(43,183.88)
(44,152.2038)
(45,132.7318)
(46,120.9921)
(47,133.9612)
(48,193.1049)
(49,166.3023)
(50,147.0791)
(51,144.9401)
(52,224.4102)
(53,197.4109)
(54,187.6447)
(55,195.2918)
(56,207.8479)
(57,156.4889)
(58,176.1167)
(59,163.7641)
(60,143.4527)
(61,181.6429)
(62,173.74)
(63,201.8255)
(64,223.9384)
(65,268.2102)
(66,159.1442)
(67,171.1998)
(68,218.0616)
(69,143.839)
(70,211.5478)
(71,318.1042)
(72,271.3153)
(73,195.5235)
(74,217.5944)
(75,228.5515)
    };
    \addlegendentry{Algorithm's Execution Time}
    \end{axis}

\end{tikzpicture}
\caption{Performance evaluation}
\label{fig:perfeval}
\end{figure}
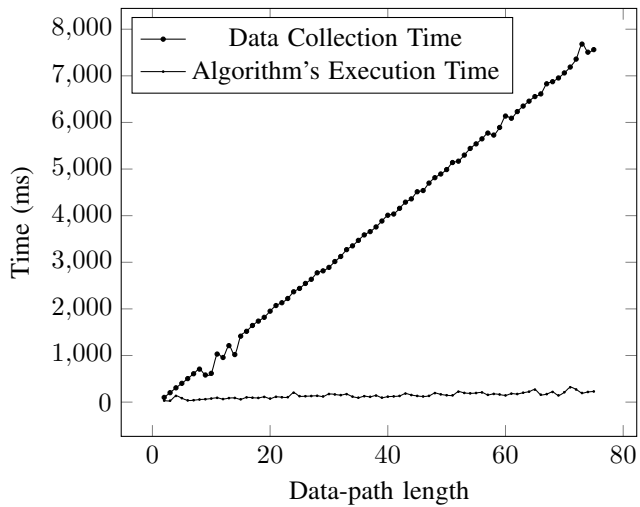

\section{Related work}\label{sec:rel}


The approaches presented in \cite{enase18,ewdts18} aim at performing model-based testing of the whole SDN framework using active testing techniques. After executing the test cases, the authors propose an approach to conclude about the implemented data-paths that inspired this work; their approach is based on specific traffic generation. However, the authors consider only pairwise ICMP echo request/reply traffic generation. Thus, it only discovers the pairwise reachability between hosts, and no other type of data-paths.

In \cite{handigol2012debugger}, Handigol et al. introduce a network debugger tool to help identifying the root cause of network errors and particularly SDN related by showing the sequence of events which lead to aforementioned errors. We believe such tools are extremely important, nonetheless, their tool cannot be directly applied to our problem statement. Likewise, other methods focused on detecting inconsistencies are the ones presented in \cite{veriflow,veridp,zhang2018foces,testpacketgen}.


SDN Traceroute \cite{sdntraceroute} is a tool to trace the path a given packet follows; their approach relies on the prior identification of forwarding devices, then to install a set of rules at each of these devices, in order to forward to the controller the probes sent by the tool. The authors do not focus on providing formal guarantees for discovering all operational data-paths. Additionally, our approach does not install additional forwarding rules (or modifies the packet headers) in order to achieve the data-path discovery. Other approaches which have similar methods (to verify the forwarding rules) are presented in \cite{cherrypick,zhang2017rev}. 

NetSight \cite{netsight} collects and stores information from network packets to build so-called ``packet histories''; later, such information can be used to compute the packet's ``trace''. However, we note that if the test cases are not actively generated / executed, no guarantees can be provided, the conclusion about the data-paths depend on the observed traffic.

An interesting work used to verify certain properties over the data-plane configuration is presented in \cite{mai2011debugging}. The presented tool collects data-plane information and network invariants, and converts them into Boolean expressions. Later the SAT (Boolean satisfiability) problem is used to check that the invariants hold over the data-plane configuration. Again, this approach is not focused on data-path discovery. Another similar work used to verify the data-plane against certain properties is presented in \cite{dobrescu2014software}.


To the best of our knowledge, there are no works that focus on the discovery of data-paths in SDN data-planes which provide guarantees. Furthermore, most of the existing approaches focus on static analysis, which can be ineffective for certain configurations, e.g., forwarding rules to the controller.

\section{Conclusion and future work}\label{sec:conc}
In this paper, an approach for identifying the operational data-paths in SDN data-planes has been presented; the approach is based on distributed test case generation and monitoring. We have shown the conditions when the execution of a test suite (a set of test cases, i.e., network packets to be generated at network hosts) are necessary and sufficient to observe all implemented data-paths. Furthermore, we have developed a set of tools that implement the proposed approach, and experimentally shown that it can be useful for discovering real data-paths installed by SDN applications, and how sometimes, undesired data-paths appear in the data-plane.

This work opens a number of directions that we plan to address in the future. For instance, studying the trade-off between the level of access and the guarantees given. Likewise, it is interesting to study the problem with different assumptions, e.g., regarding the reliability of packet transmission or the synchronized clocks within the forwarding devices. Additionally, it is important to further study the verification that can be performed after the data-paths are discovered; for example, reporting functional errors, repairing the forwarding rule configuration, etc.

\section*{Acknowledgment}

The results obtained in this work were partially funded by the Celtic-Plus European project SENDATE, ID C2015/3-1. The authors would like to thank (in no particular order) Dr. Natalia Kushik from T\'el\'ecom SudParis, as well as the ex-students Takeyuki Koyama and Yohann Lallier from the same institution, and to Prof. Nina Yevtushekno, Dr. Igor Burdonov, Dr. Alexandre Kossatchev from the Russian Academy of Sciences for fruitful discussions that led to the results obtained in this paper.



%

\bibliographystyle{IEEEtran}
\bibliography{references.bib}

\end{document}